\documentclass[11pt]{article}

\usepackage{fullpage,verbatim,amsmath,amssymb,amsthm,float,graphicx,wrapfigure,setspace}

% Formatage des algos
\floatstyle{ruled}
\newfloat{algorithm}{t}{lop}
\floatname{algorithm}{Algorithm}

\newtheorem{lemma}{Lemma}
\newtheorem{theorem}{Theorem}
\newtheorem{definition}{Definition}
\newtheorem{specification}{Specification}

\begin{document}

\title{Introducing Speculation in Self-Stabilization\\ \large{An Application to Mutual Exclusion}}

\author{Swan Dubois\protect\footnote{swan.dubois@epfl.ch}\\LPD, EPFL, Switzerland \and Rachid Guerraoui\protect\footnote{rachid.guerraoui@epfl.ch}\\LPD, EPFL, Switzerland}

\date{}

\maketitle

\begin{abstract}
Self-stabilization ensures that, after any transient fault, the system recovers in a finite time and eventually exhibits. Speculation consists in guaranteeing that the system satisfies its requirements for any execution but exhibits significantly better performances for a subset of executions that are more probable. A speculative protocol is in this sense supposed to be both robust and efficient in practice. 

We introduce the notion of speculative stabilization which we illustrate through the mutual exclusion problem. We then present a novel speculatively stabilizing mutual exclusion protocol. Our protocol is self-stabilizing for any asynchronous execution. We prove that its stabilization time for synchronous executions is $\left\lceil diam(g)/2\right\rceil$ steps (where $diam(g)$ denotes the diameter of the system). 

This complexity result is of independent interest. The celebrated mutual exclusion protocol of Dijkstra stabilizes in $n$ steps (where $n$ is the number of processes) in synchronous executions and the question whether the stabilization time could be strictly smaller than the diameter has been open since then (almost 40 years). We show that this is indeed possible for any underlying topology. We also provide a lower bound proof that shows that our new stabilization time of $\left\lceil diam(g)/2\right\rceil$ steps is optimal for synchronous executions, even if asynchronous stabilization is not required.
\end{abstract}

\noindent\textbf{Keywords:} Fault-tolerance; Speculation; Self-stabilization; Mutual exclusion.

\section{Introduction}

The speculative approach to distributed computing \cite{L08c, P01j,J03c,GKQV10c,GKL12c} lies on the inherent trade-of between robustness and efficiency. Indeed, we typically require distributed applications to be safe and live under various hostile conditions such as asynchronism, faults, attacks, and contention. This typically leads to high consumption of system resources, \emph{e.g.} time of computation, which is due to the need to perform synchronizations, redundancies or checking. 

The speculative approach assumes that, even if degraded conditions are indeed possible, they are less probable than friendly conditions (for example, synchronous executions without faults). The underlying idea is to simultaneously ensure that the protocol is correct whatever the execution is (even in degraded conditions) but to optimize it for a subset of executions that are the most probable in practice. Even if this idea was applied in various contexts, it has never been applied to distributed systems tolerant to transient faults, \emph{i.e.} self-stabilizing systems \cite{D74j}. In fact, it was not clear whether self-stabilization and speculation could be even combined because of the specific nature of transient faults, for they could corrupt the state of the entire system. The objective of this paper is to explore this avenue. 

Self-stabilization was introduced by Dijkstra \cite{D74j}. Intuitively, a self-stabilizing system ensures that, after the end of any transient fault, the system reaches in a finite time, without any external help, a correct behavior. In other words, a self-stabilizing system repairs itself from any catastrophic state. Since the seminal work of Dijkstra, self-stabilizing protocols were largely studied (see \emph{e.g.} \cite{D00b, T09bc,H02o}). The main objective has been to design self-stabilizing systems tolerating asynchronism while reducing the stabilization time, \emph{i.e.}, the worst time needed by the protocol to recover a correct behavior over all executions of the system.

Our contribution is twofold. First, we define a new variation of self-stabilization in which the main measure of complexity, the stabilization time, is regarded as a function of the adversary and not as a single value. Indeed, we associate to each adversary (known as a \emph{scheduler} or \emph{daemon} in self-stabilization) the worst stabilization time of the protocol over the set of executions captured by this adversary. Then, we define a speculatively stabilizing protocol as a protocol that self-stabilizes under a given adversary but that exhibits a significantly better stabilization time under another (and weaker) adversary. In this way, we ensure that the protocol stabilizes in a large set of executions but guarantees efficiency only on a smaller set (the one we speculate more probable in practice). For the sake of simplicity, we present our notion of speculative stabilization for two adversaries. It could be easily extended to an arbitrary number of adversaries.

Although the idea of optimizing the stabilization time for some subclass of executions is new, some self-stabilizing protocols satisfy (somehow by accident) our definition of speculative stabilization. For example, the Dijkstra's mutual exclusion protocol stabilization time falls to $n$  steps (the number of processes) in synchronous executions. The question whether one could do better has been open since then, \emph{i.e.} during almost 40 years. We close the question in this paper through the second contribution of this paper.

Indeed, we present a novel speculatively stabilizing mutual exclusion protocol. We prove that its stabilization time for synchronous executions is $\left\lceil diam(g)/2\right\rceil$ steps (where $diam(g)$ denotes the diameter of the system), which significantly improves the bound of Dijkstra's protocol. We prove that we cannot improve it. Indeed, we present a lower bound result on the stabilization time of mutual exclusion for synchronous executions. This result is of independent interest since it remains true beyond the scope of speculation and holds even for a protocol that does not need to stabilize in asynchronous executions.

Designing our protocol went through addressing two technical challenges. First, we require the stabilization of a global property (the uniqueness of critical section) in a time strictly smaller than the diameter of the system, which is counter-intuitive (even for synchronous executions). Second, the optimization of the stabilization time for synchronous executions must not prevent the stabilization for asynchronous ones. 

The key to addressing both challenges was a ``reduction'' to clock synchronization: more specifically, leveraging the self-stabilizing asynchronous unison protocol of \cite{BPV04c} within mutual exclusion. We show that it is sufficient to choose correctly the clock size and to grant the access to critical section upon some clock values to ensure $(i)$ the self-stabilization of the protocol for any asynchronous execution as well as $(ii)$ the optimality of its stabilization time for synchronous ones. This reduction was also, we believe, the key to the genericity of our protocol. Unlike Dijkstra's protocol which assumes an underlying ring shaped communication structure, our protocol runs over any communication structure.

We could derive our lower bound result for synchronous executions based on the observation that a process can gather information at most at distance $d$ in $d$ steps whatever protocol it executes. Hence, in the worst case, it is impossible to prevent two processes from simultaneously entering a critical section during the first $\left\lceil diam(g)/2\right\rceil$ steps of all executions with a deterministic protocol.

The rest of this paper is organized as follows. Section \ref{sec:model} introduces the model and the definitions used through the paper. Section \ref{sec:speculativeStabilization} presents our notion of speculative stabilization. Section \ref{sec:mutualExclusion} presents our mutual exclusion protocol. Section \ref{sec:lowerBound} provides our lower bound result. Section \ref{sec:conclusion} ends the paper with some perspectives.

\section{Model, Definitions, and Notations}\label{sec:model}

We consider the classical model of distributed systems introduced by Dijkstra \cite{D74j}. Processes communicate by atomic reading of neighbors' states and the (asynchronous) adversary of the system is captured by an abstraction called \emph{daemon}.

\paragraph{Distributed protocol.} The distributed system consists of a set of processes that form a communication graph. The processes are vertices in this graph and the set of those vertices is denoted by $V$. The edges of this graph are pairs of processes that can communicate with each other. Such pairs are neighbors and the set of edges is denoted by $E$ ($E\subseteq V^2$). Hence, $g=(V,E)$ is the communication graph of the distributed system. Each vertex of $g$ has a set of variables, each of them ranges over a fixed domain of values. A state $\gamma(v)$ of a vertex $v$ is the vector of values of all variables of $v$ at a given time. An assignment of values to all variables of the graph is a configuration. The set of configurations of $g$ is denoted by $\Gamma$. An action $\alpha$ of $g$ transitions the graph from one configuration to another. The set of actions of $g$ is denoted by $A$ ($A=\{(\gamma,\gamma')|\gamma\in\Gamma ,\gamma'\in\Gamma ,\gamma\neq\gamma'\}$). A \emph{distributed protocol} $\pi$ on $g$ is defined as a subset of $A$ that gathers all actions of $g$ allowed by $\pi$. The set of distributed protocols on $g$ is denoted by $\Pi$ ($\Pi=P(A)$ where, for any set $S$, $P(S)$ denotes the powerset of $S$).

\paragraph{Execution.} Given a graph $g$, a distributed protocol $\pi$ on $g$, an \emph{execution} $\sigma$ of $\pi$ on $g$, starting from a given configuration $\gamma_0$, is a maximal sequence of actions of $\pi$ of the following form $\sigma=(\gamma_0,\gamma_1)(\gamma_1,\gamma_2)(\gamma_2,\gamma_3)\ldots$. An execution is \emph{maximal} if it is either infinite or finite but its last configuration is terminal (that is, there exists no actions of $\pi$ starting from this configuration). The set of all executions of $\pi$ on $g$, starting from all configurations of $\Gamma$, is denoted by $\Sigma_\pi$.

\paragraph{Adversary (daemon).} Intuitively, a daemon is a restriction on the executions of distributed protocols to be considered possible. For a distributed protocol $\pi$, at each configuration $\gamma$, a subset of vertices are \emph{enabled}, that is there exists an action of $\pi$ that modifies their state (formally, $\exists \gamma'\in\Gamma,(\gamma,\gamma')\in \pi, \gamma(v)\neq\gamma'(v)$). The daemon then chooses one of the possible action of $\pi$ starting from $\gamma$ (and hence, selects a subset of enabled vertices that are allowed to modify their state during this action). A formal definition follows.

\begin{definition}[Daemon]
Given a graph $g$, a daemon $d$ on $g$ is a function that associates to each distributed protocol $\pi$ on $g$ a subset of executions of $\pi$, that is $d:\pi\in\Pi\longmapsto d(\pi)\in P(\Sigma_\pi)$.
\end{definition}

Given a graph $g$, a daemon $d$ on $g$ and a distributed protocol $\pi$ on $g$, an execution $\sigma$ of $\pi$ ($\sigma\in\Sigma_\pi$) is \emph{allowed} by $d$ if and only if $\sigma\in d(\pi)$. Also, given a graph $g$, a daemon $d$ on $g$ and a distributed protocol $\pi$ on $g$, we say that $\pi$ \emph{runs} on $g$ under $d$ if we consider that the only possible executions of $\pi$ on $g$ are those allowed by $d$.

Some classical examples of daemons follow. The unfair distributed daemon \cite{KY97j} (denoted by $ud$) is the less constrained one because we made no assumption on its choices (any execution of the distributed protocol is allowed). The synchronous daemon \cite{H90j} (denoted by $sd$) is the one that selects all enabled vertices in each configuration. The central daemon \cite{D74j} (denoted by $cd$) selects only one enabled vertex in each configuration.

This way of viewing daemons as a set of possible executions (for a particular graph $g$) drives a natural partial order over the set of daemons. For a particular graph $g$, a daemon $d$ is more powerful than another daemon $d'$ if all executions allowed by $d'$ are also allowed by $d$. Overall, $d$ has more scheduling choices than $d'$. A more precise definition follows.

\begin{definition}[Partial order over daemons]
For a given graph $g$, we define the following partial order $\preccurlyeq$ on $\mathcal{D}$: $\forall (d,d')\in\mathcal{D},d\preccurlyeq d'\Leftrightarrow (\forall \pi\in\Pi,d(\pi)\subseteq d'(\pi))$. If two daemons $d$ and $d'$ satisfy $d\preccurlyeq d'$, we say that $d'$ is more powerful than $d$.
\end{definition}

For example, the unfair distributed daemon is more powerful than any daemon (in particular the synchronous one). Note that some daemons (for example the synchronous and the central ones) are not comparable. For a more detailed discussion about daemons, the reader is referred to \cite{DT11r}.

\paragraph{Further notations.} Given a graph $g$ and a distributed protocol $\pi$ on $g$, we introduce the following set of notations. First, $n$ denotes the number of vertices of the graph whereas $m$ denotes the number of edges ($n=|V|$ and $m=|E|$). The set of neighbors of a vertex $v$ is denoted by $neig(v)$. The distance between two vertices $u$ and $v$ (that is, the length of a shortest path between $u$ and $v$ in $g$) is denoted by $dist(g,u,v)$. The diameter of $g$ (that is, the maximal distance between two vertices of $g$) is denoted by $diam(g)$. For any execution $e=(\gamma_0,\gamma_1)(\gamma_1,\gamma_2)\ldots$, we denote by $e_i$ the prefix of $e$ of length $i$ (that is $e_i=(\gamma_0,\gamma_1)(\gamma_1,\gamma_2)\ldots(\gamma_{i-1},\gamma_i)$).

\paragraph{Guarded representation of distributed protocols.} For the sake of clarity, we do not describe distributed protocols by enumerating all their actions. Instead, we represent distributed protocols using a local description of actions borrowed from \cite{D74j}. Each vertex has a local protocol consisting of a set of guarded rules of the following form: $<\textit{label}>~::~<\textit{guard}>~\longrightarrow~<\textit{action}>$. $<\textit{label}>$ is a name to refer to the rule in the text. $<\textit{guard}>$ is a predicate that involves variables of the vertex and of its neighbors. This predicate is true if and only if the vertex is enabled in the current configuration. We say that a rule is enabled in a configuration when its guard is evaluated to true in this configuration. $<\textit{action}>$ is a set of instructions modifying the state of the vertex. This set of instructions must describe the changes of the vertex state if this latter is activated by the daemon.

\paragraph{Self-stabilization.} Intuitively, to be self-stabilizing \cite{D74j}, a distributed protocol must satisfy the two following properties: $(i)$ \emph{closure}, that is there exists some configuration from which any execution of the distributed protocol satisfies the specification; and $(ii)$ \emph{convergence}, that is starting from any arbitrary configuration, any execution of the distributed protocol reaches in a finite time a configuration that satisfies the closure property.

Self-stabilization induces fault-tolerance since the initial configuration of the system may be arbitrary because of a burst of transient faults. Then, a self-stabilizing distributed protocol ensures that after a finite time (called the convergence or stabilization time), the distributed protocol recovers on his own a correct behavior (by convergence property) and keeps this correct behavior until there is no faults (by closure property). 

\begin{definition}[Self-stabilization \cite{D74j}]
A distributed protocol $\pi$ is self-stabilizing for specification $spec$ under a daemon $d$ if starting from any arbitrary configuration every execution of $d(\pi)$ contains a configuration from which every execution of $d(\pi)$ satisfies $spec$.
\end{definition}

For any self-stabilizing distributed protocol $\pi$ under a daemon $d$ for a specification $spec$, its convergence (or stabilization) time (denoted by $conv\_time(\pi,d)$) is the worst stabilization time (that is, the number of actions required to reach a configuration from which any execution satisfies $spec$) of executions of $\pi$ allowed by $d$. Note that, for any self-stabilizing distributed protocol $\pi$ under a daemon $d$, $\pi$ is self-stabilizing under any daemon $d'$ such that $d'\preccurlyeq d$ and $conv\_time(\pi,d')\leq conv\_time(\pi,d)$.

\section{Speculative Stabilization}\label{sec:speculativeStabilization}

Intuitively, a speculative protocol ensures the correctness in a large set of executions but is optimized for some scenarios that are speculated to be more frequent (maybe at the price of worst performance in less frequent cases). 

Regarding self-stabilization, the most common measure of complexity is the stabilization time. Accordingly, we choose to define a speculatively stabilizing protocol as a self-stabilizing protocol under a given daemon that exhibits a significantly better stabilization time under a weaker daemon (the latter gathers scenarios that are speculated to be more frequent). We can now define our notion of speculative stabilization.

\begin{definition}[Speculative Stabilization]
For two daemons $d$ and $d'$ satisfying $d'\prec d$, a distributed protocol $\pi$ is $(d,d',f)$-speculatively stabilizing for specification $spec$ if: $(i)$ $\pi$ is self-stabilizing for $spec$ under $d$; and $(ii)$ $f$ is a function on $g$ satisfying $\frac{conv\_time(\pi,d)}{conv\_time(\pi,d')}\in\Omega(f)$.
\end{definition}

We restrict ourselves for two daemons here for the sake of clarity. We can easily extend this definition to an arbitrary number of daemons (as long as they are comparable). For instance, we can say that a distributed protocol $\pi$ is $(d,d_1,d_2,f_1,f_2)$-speculatively stabilizing (with $d_1\prec d$ and $d_2\prec d$) if it is both $(d,d_1,f_1)$-speculatively stabilizing and $(d,d_2,f_2)$-speculatively stabilizing.

Still for the sake of simplicity, we say in the following that a distributed protocol $\pi$ is $d$-speculatively stabilizing for specification $spec$ if there exists a daemon $d\neq ud$ such that $\pi$ is $(ud,d,f)$-speculatively stabilizing for specification $spec$ with $f>1$. In other words, a $d$-speculatively stabilizing distributed protocol is self-stabilizing under the unfair distributed daemon (and hence always guarantees convergence) but is optimized for a given subclass of executions described by $d$.

\paragraph{Examples.} Although the idea of speculation approaches in self-stabilization has not been yet precisely defined, there exists some examples of self-stabilizing distributed protocols in the literature that turn out to be speculative. We survey some of them in the following.

The seminal work of Dijkstra \cite{D74j} introduced self-stabilization in the context of mutual exclusion. His celebrated protocol operates only on rings. It is in fact  $(ud,sd,g\mapsto n)$-speculatively stabilizing since it stabilizes upon $\Theta(n^2)$ steps under the unfair distributed daemon and it is easy to see that it needs only $n$ steps to stabilize under the synchronous daemon. The well-known $min+1$ protocol of \cite{HC92j} is $(ud,sd,g\mapsto n^2/diam(g))$-speculatively stabilizing for BFS spanning tree construction. Its stabilization time is in $\Theta(n^2)$ steps under the unfair distributed daemon while it is in $\Theta(diam(g))$ steps under the synchronous daemon. Another example is the self-stabilizing maximal matching protocol of \cite{MMPT09j}. This protocol is $(ud,sd,g\mapsto m/n)$-speculatively stabilizing: its stabilization time is 4n+2m (respectively 2n+1) steps under the unfair distributed (respectively synchronous) daemon. 

\section{A new Mutual Exclusion Protocol}\label{sec:mutualExclusion}

Mutual exclusion was classically adopted as a benchmark in self-stabilization under various settings \cite{D74j,KY02j,DHT04c,CSZ08c,BB11c}. Intuitively, it consists in ensuring that each vertex can enter infinitely often in critical section and there is never two vertices simultaneously in the critical section. Using such a distributed protocol, vertices can for example access shared resources without conflict.

Our contribution in this context is a novel self-stabilizing distributed protocol for mutual exclusion under the unfair distributed daemon that moreover exhibits optimal convergence time under the synchronous daemon. Contrary to the Dijkstra's protocol, our protocol supports any underlying communication structure (we do not assume that the communication graph is reduced to a ring). Thanks to speculation, our protocol is ideal for environment in which we speculate that most of the executions are synchronous.

We adopt the following specification of mutual exclusion. For each vertex $v$, we define a predicate $privileged_v$ (over variables of $v$ and possibly of its neighbors). We say that a vertex $v$ is privileged in a configuration $\gamma$ if and only if $privileged_v=true$ in $\gamma$. If a vertex $v$ is privileged in a configuration $\gamma$ and $v$ is activated during an action $(\gamma,\gamma')$, then $v$ executes its critical section during this action. We can now specify the mutual exclusion problem as follows.

\begin{specification}[Mutual exclusion $spec_{ME}$]
An execution $e$ satisfies $spec_{ME}$ if at most one vertex is privileged in any configuration of $e$ (safety) and any vertex infinitely often executes its critical section in $e$ (liveness).
\end{specification}

The rest of this section is organized as follows. Section \ref{sub:protocol} overviews our protocol. Section \ref{sub:correctness} proves the correctness of our protocol under the unfair distributed daemon. Section \ref{sub:analysis} analyzes its stabilization time under the synchronous and the unfair distributed daemon.

\subsection{Speculatively Stabilizing Mutual Exclusion}\label{sub:protocol}

As we restrict ourselves to deterministic protocols, we know by \cite{BP89j} that, to ensure mutual exclusion, we must assume a system with identities (that is, each vertex has a distinct identifier). Indeed, we know by \cite{BP89j} that the problem does not admit deterministic solution on uniform (\emph{i.e.} without identifiers) rings of composite size. Without loss of generality, we assume that the set of identities (denoted by $ID$) is equals to $\{0,1,\ldots,n-1\}$ (if this assumption is not satisfied, it is easy to define a mapping of identities satisfying it).

Our protocol is based upon an existing self-stabilizing distributed protocol for the asynchronous unison problem \cite{GH90j,CFG92c}. This problem consists in ensuring, under the unfair distributed daemon, some synchronization guarantees on vertices' clocks. More precisely, each vertex has a register $r_v$ that contains a clock value. A clock is a bounded set enhanced with an incrementation function. Intuitively, an asynchronous unison protocol ensures that the difference between neighbors' registers is bounded and that each register is infinitely often incremented.

In the following, we give the definition of this problem and the solution proposed in \cite{BPV04c} from which we derive our mutual exclusion protocol.

\paragraph{Clock.} A bounded clock $\mathcal{X}=(C,\phi)$ is a bounded set $C=cherry(\alpha,K)$ (parametrized with two integers $\alpha\geq 1$ and $K\geq 2$) enhanced with an incrementation function $\phi$ defined as follows.

\begin{wrapfigure}{r}{6cm}
\noindent\begin{center} 
\includegraphics[height=5cm]{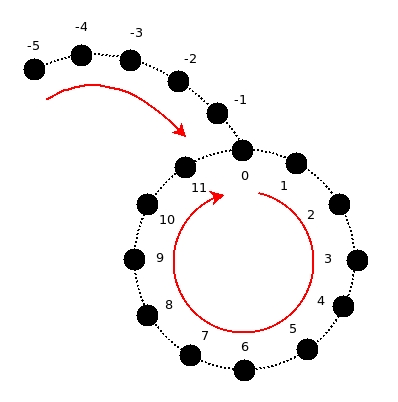}
\end{center}
\caption{A bounded clock $\mathcal{X}=(cherry(\alpha,K),\phi)$ with $\alpha=5$ and $K=12$.}
\label{fig:cherry}
\end{wrapfigure}

Let $c$ be any integer. Denote by $\overline{c}$ the unique element in $[0,\ldots,K-1]$ such that $c=\overline{c}$ mod $K$. We define the distance $d_K(c,c')=min\{\overline{c-c'},\overline{c'-c}\}$ on $[0,\ldots,K-1]$. Two integers $c$ and $c'$ are said to be locally comparable if and only if $d_K(a,b)\leq 1$. We then define the local order relation $\leq_l$ as follows: $c\leq_l c'$ if and only if $0\leq \overline{c'-c}\leq 1$. Let us define $cherry(\alpha,K)=\{-\alpha,\ldots,0,\ldots,K-1\}$. Let $\phi$ be the function defined by:
\[\phi:c\in cherry(\alpha,K)\mapsto\left\{\begin{array}{ll}
(c+1) & \text{if } c<0\\
(c+1) \text{ mod } K & \text{otherwise}
\end{array}\right.\]

The pair $\mathcal{X}=(cherry(\alpha,K),\phi)$ is called a bounded clock of initial value $\alpha$ and of size $K$ (see Figure \ref{fig:cherry}). We say that a clock value $c\in cherry(\alpha,K)$ is incremented when this value is replaced by $\phi(c)$. A reset on $\mathcal{X}$ consists of an operation replacing any value of $cherry(\alpha,K)\setminus\{-\alpha\}$ by $-\alpha$. Let $init_\mathcal{X}=\{-\alpha,\ldots,0\}$ and $stab_\mathcal{X}=\{0,\ldots,K-1\}$ be the set of initial values and correct values respectively. Let us denote $init^*_\mathcal{X}=init_\mathcal{X}\setminus\{0\}$, $stab^*_\mathcal{X}=stab_\mathcal{X}\setminus\{0\}$, and $\leq_{init}$ the usual total order on $init_\mathcal{X}$.

\paragraph{Asynchronous unison.} Given a distributed system in which each vertex $v$ has a register $r_v$ taken a value of a bounded clock $\mathcal{X}=(C,\phi)$ with $C=cherry(\alpha,K)$, we define a legitimate configuration for asynchronous unison as a configuration satisfying: $\forall v\in V,\forall u\in neig(v),(r_v\in stab_\mathcal{X})\wedge(r_u\in stab_\mathcal{X})\wedge(d_K(r_v,r_u)\leq 1)$. In other words, a legitimate configuration is a configuration in which each clock value is a correct one and the drift between neighbors' registers is bounded by $1$. We denote by $\Gamma_1$ the set of legitimate configurations for asynchronous unison. Note that we have, for any configuration of $\Gamma_1$ and any pair of vertices, $(u,v)$, $d_K(r_u,r_v)\leq diam(g)$ by definition. We can now specify the problem.

\begin{specification}[Asynchronous unison $spec_{AU}$]
An execution $e$ satisfies $spec_{AU}$ if every configuration of $e$ belongs to $\Gamma_1$ (safety) and the clock value of each vertex is infinitely often incremented in $e$ (liveness).
\end{specification}

In \cite{BPV04c}, the authors propose a self-stabilizing asynchronous unison distributed protocol in any anonymous distributed system under the unfair distributed daemon. The main idea of this protocol is to reset the clock value of each vertex that detects any local safety violation (that is, whenever some neighbor that has a not locally comparable clock value). Otherwise, a vertex is allowed to increment its clock (of initial or of correct value) only if this latter has locally the smallest value. The choice of parameters $\alpha$ and $K$ are crucial. In particular, to make the protocol self-stabilizing for any anonymous communication graph $g$ under the unfair distributed daemon, the parameters must satisfy $\alpha\geq hole(g)-2$ and $K>cyclo(g)$, where $hole(g)$ and $cyclo(g)$ are two constants related to the topology of $g$. Namely, $hole(g)$ is the length of a longest hole in $g$ (\emph{i.e.} the longest chordless cycle), if $g$ contains a cycle, $2$ otherwise. $cyclo(g)$ is the cyclomatic characteristic of $g$ (\emph{i.e.} the length of the maximal cycle of the shortest maximal cycle basis of $g$), if $g$ contains a cycle, $2$ otherwise. Actually, \cite{BPV04c} shows that taking $\alpha\geq hole(g)-2$ ensures that the protocol recovers in finite time a configuration in $\Gamma_1$. Then, taking $K>cyclo(g)$ ensures that each vertex increments its local clock infinitely often. Note that, by definition, $hole(g)$ and $cyclo(g)$ are bounded by $n$.

\paragraph{The mutual exclusion protocol.} The main idea behind our protocol is to execute the asynchronous unison of \cite{BPV04c}, presented earlier, with a particular bounded clock and then to grant the privilege to a vertex only when its clock reaches some value. The clock size must be sufficiently large to ensure that at most one vertex is privileged in any configuration of $\Gamma_1$. If the definition of the predicate $privileged$ guarantees this property, then the correctness of our mutual exclusion protocol follows from the one of the underlying asynchronous unison.

More specifically, we choose a bounded clock $\mathcal{X}=(cherry(\alpha,K),\phi)$ with $\alpha=n$ and $K=(2.n-1)(diam(g)+1)+2$ and we define $privileged_v \equiv (r_v=2.n+2.diam(g).id_v)$. In particular, note that we have : $privileged_{v_0} \equiv (r_{v_0}=2.n)$ and $privileged_{v_{n-1}} \equiv (r_{v_{n-1}}=(2.n-2)(diam(g)+1)+2)$.

Our distributed protocol, called $\mathcal{SSME}$ (for $\mathcal{S}$peculatively $\mathcal{S}$tabilizing $\mathcal{M}$utual $\mathcal{E}$xclusion), is described in Algorithm 1. Note that this protocol is identical to the one of \cite{BPV04c} except for the size of the clock and the definition of the predicate $privileged$ (that does not interfere with the protocol).

We prove in the following that this protocol is self-stabilizing for $spec_{ME}$ under the unfair distributed daemon and exhibits the optimal convergence time under the synchronous one. In other words, we will prove that this protocol is $sd$-speculatively stabilizing for $spec_{ME}$. 

\begin{algorithm}
\caption{$\mathcal{SSME}$: Mutual exclusion protocol for vertex $v$.}
\textbf{Constants:}\\
$\begin{array}{lll}
id_v\in ID & : & \text{identity of } v\\
n\in\mathbb{N} & : & \text{number of vertices of the communication graph}\\
diam(g)\in\mathbb{N} & : & \text{diameter of the communication graph}\\
\mathcal{X}=(cherry(n,(2.n-1)(diam(g)+1)+2),\phi) & : & \text{clock of $v$}
\end{array}$\\
\textbf{Variable:}\\
$\begin{array}{lll}
r_v\in \mathcal{X} & : & \text{register of } v
\end{array}$\\
\textbf{Predicates:}\\
$\begin{array}{lll}
privileged_v & \equiv & (r_v=2.n+2.diam(g).id_v)\\
correct_v(u) & \equiv & (r_v\in stab_\mathcal{X})\wedge(r_u\in stab_\mathcal{X})\wedge(d_K(r_v,r_u)\leq 1)\\
allCorrect_v & \equiv & \forall u\in neig(v), correct_v(u)\\
normalStep_v & \equiv & allCorrect_v\wedge(\forall u\in neig(v),r_v\leq_l r_u)\\
convergeStep_v & \equiv & r_v\in init^*_\mathcal{X}\wedge\forall u\in neig(v),(r_u\in init_\mathcal{X}\wedge r_v\leq_{init} r_u)\\
resetInit_v & \equiv & \neg allCorrect_v\wedge(r_v\notin init_\mathcal{X})
\end{array}$\\
\textbf{Rules:}\\
$\begin{array}{lllll}
NA & :: & normalStep_v & \longrightarrow & r_v:=\phi(r_v)\\
CA & :: & convergeStep_v & \longrightarrow & r_v:=\phi(r_v)\\
RA & :: & resetInit_v & \longrightarrow & r_v:=-n
\end{array}$
\end{algorithm}

\subsection{Correctness}\label{sub:correctness}

We prove here the self-stabilization of $\mathcal{SSME}$ under the unfair distributed daemon.

\begin{theorem}\label{th:correctness}
$\mathcal{SSME}$ is a self-stabilizing distributed protocol for $spec_{ME}$ under $ufd$.
\end{theorem}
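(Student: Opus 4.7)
The overall strategy is to reduce the correctness of $\mathcal{SSME}$ to that of the underlying asynchronous unison protocol of \cite{BPV04c}, from which it differs only in the clock size and in the (purely output) predicate $privileged$. Concretely, the transitions are exactly those of the unison protocol, so every execution of $\mathcal{SSME}$ is an execution of the unison protocol on the clock $\mathcal{X}=(cherry(n,(2n-1)(diam(g)+1)+2),\phi)$. The first step is therefore to check that the chosen parameters $\alpha=n$ and $K=(2n-1)(diam(g)+1)+2$ satisfy the two conditions recalled in Section~\ref{sub:protocol}, namely $\alpha\geq hole(g)-2$ and $K>cyclo(g)$; this is immediate since $hole(g)\leq n$ and $cyclo(g)\leq n\leq K$. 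Invoking \cite{BPV04c}, we deduce that under $ufd$, every execution of $\mathcal{SSME}$ eventually reaches a configuration of $\Gamma_1$, from which all subsequent configurations remain in $\Gamma_1$ and every register is incremented infinitely often.

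Next, I would establish the closure/safety part of $spec_{ME}$: in every configuration of $\Gamma_1$, at most one vertex is privileged. Fix such a configuration and assume, toward a contradiction, that two distinct vertices $u,v$ with $id_u<id_v$ satisfy $r_u=2n+2\,diam(g)\,id_u$ and $r_v=2n+2\,diam(g)\,id_v$. By the observation recorded just before the specification of $spec_{AU}$, $d_K(r_u,r_v)\leq diam(g)$. The ``forward'' residue is $\overline{r_v-r_u}=2\,diam(g)(id_v-id_u)\geq 2\,diam(g)$, hence strictly larger than $diam(g)$. The ``backward'' residue is $\overline{r_u-r_v}=K-2\,diam(g)(id_v-id_u)$; since $id_v-id_u\leq n-1$, this quantity is at least $K-2\,diam(g)(n-1)=(2n-1)(diam(g)+1)+2-2(n-1)\,diam(g)=diam(g)+2n+1$, again strictly larger than $diam(g)$. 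Both residues exceed $diam(g)$, contradicting $d_K(r_u,r_v)\leq diam(g)$. I expect this arithmetic comparison — the precise justification that the $n$ privileged values are spaced widely enough modulo $K$ to be pairwise incompatible with the unison safety bound — to be the main obstacle of the proof, since it is the place where the exact choice $K=(2n-1)(diam(g)+1)+2$ is essential.

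Finally, I would prove the liveness part of $spec_{ME}$: every vertex executes its critical section infinitely often. Pick any vertex $v$. From $\Gamma_1$ onward, all registers stay in $stab_\mathcal{X}=\{0,\ldots,K-1\}$ and, by the liveness guarantee of the unison protocol, $r_v$ is incremented infinitely often; since increments within $stab_\mathcal{X}$ are modulo $K$, the register $r_v$ takes the value $2n+2\,diam(g)\,id_v$ infinitely often. Each such visit corresponds to $v$ being activated (by rule $NA$, since in $\Gamma_1$ no reset is ever enabled and $r_v\notin init^*_\mathcal{X}$), and an activation of a privileged vertex executes its critical section by the specification of $privileged$. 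This yields $spec_{ME}$ from some configuration onward, completing the self-stabilization argument under $ufd$.
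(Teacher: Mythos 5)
Your proposal is correct and follows essentially the same route as the paper: verify that $\alpha=n$ and $K=(2n-1)(diam(g)+1)+2$ meet the conditions of \cite{BPV04c}, invoke its self-stabilization for $spec_{AU}$ under $ufd$, derive safety from the bound $d_K(r_u,r_v)\leq diam(g)$ in $\Gamma_1$ combined with the spacing of the privileged clock values, and derive liveness from that of the unison. The only difference is that you explicitly carry out the residue arithmetic (both $\overline{r_v-r_u}\geq 2\,diam(g)$ and $\overline{r_u-r_v}\geq diam(g)+2n+1$), which the paper leaves implicit behind ``by definition of the predicate $privileged$''; your computation is correct and makes the role of the exact choice of $K$ transparent.
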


\begin{proof}
As we choose $\alpha=n\geq hole(g)-2$ and $K=(2.n-1)(diam(g)+1)+2>n\geq cyclo(g)$, the main result of \cite{BPV04c} allows us to deduce that $\mathcal{SSME}$ is a self-stabilizing distributed protocol for $spec_{AU}$ under $ufd$ (recall that the predicate $privileged$ does not interfere with the protocol). By definition, this implies that there exists, for any execution $e$ of $\mathcal{SSME}$ under $ufd$, a suffix $e'$ reached in a finite time that satisfies $spec_{AU}$.

Let $\gamma$ be a configuration of $e'$ such a vertex $v$ is privileged in $\gamma$. Then, by definition, we have $r_v=2.n+2.diam(g).id_v$. As $\gamma$ belongs to $e'$, we can deduce that $\gamma\in\Gamma_1$. Hence, for any vertex $u\in V\setminus\{v\}$, we have $d_K(r_u,r_v)\leq diam(g)$. Then, by definition of the predicate $prvileged$, no other vertex than $v$ can be privileged in $\gamma$. We can deduce that the safety of $spec_{ME}$ is satisfied on $e'$. The liveness of $spec_{ME}$ on $e'$ follows from the one of $spec_{AU}$ and from the definition of the predicate $privileged$.

Hence, for any execution of $\mathcal{SSME}$ under $ufd$, there exists a suffix reached in a finite time that satisfies $spec_{ME}$, that proves the theorem.
\end{proof}

\subsection{Time Complexities}\label{sub:analysis}

This section analyses the time complexity of our self-stabilizing mutual exclusion protocol. In particular, we provide an upper bound of its stabilization time under the synchronous daemon (see Theorem \ref{th:stabsd}) and  under the unfair distributed daemon (see Theorem \ref{th:stabufd}).
 
\paragraph{Synchronous daemon.} We first focus on the stabilization time of $\mathcal{SSME}$ under the synchronous daemon. We need to introduce some notations and definitions.

From now, $e=(\gamma_0,\gamma_1)(\gamma_1,\gamma_2)\ldots$ denotes a synchronous execution of $\mathcal{SSME}$ starting from an arbitrary configuration $\gamma_0$. For a configuration $\gamma_i$ and a vertex $v$, $r^i_v$ denotes the value of $r_v$ in $\gamma_i$.

\begin{definition}[Island]
In a configuration $\gamma_i$, an island $I$ is a maximal (w.r.t. inclusion) set of vertices such that $I\subsetneq V$ and $\forall (u,v)\in I, u\in neig(v)\Rightarrow correct_v(u)$. A zero-island is an island such that $\exists v\in I, r^i_v=0$. A non-zero-island is an island such that $\forall v\in I, r^i_v\neq 0$.
\end{definition}

Note that any vertex $v$ that satisfies $r_v\in stab_\mathcal{X}$ in a configuration $\gamma\notin \Gamma_1$ belongs by definition to an island (either a zero-island or a non-zero-island) in $\gamma$.

\begin{definition}[Border and depth of an island]
In a configuration $\gamma_i$ that contains an island $I\neq\emptyset$, the border of $I$ (denoted by $border(I)$) is defined by $border(I)=\{v\in I|\exists u\in V\setminus I, u\in neig(v)\}$ and the depth of $I$ (denoted by $depth(I)$) is defined by $depth(I)=max\{min\{dist(g,v,u)|u\in border(I)\}|v\in I\}$.
\end{definition}

Then, we have to prove a set of preliminaries lemmas before stating our main theorem.

\begin{lemma}\label{lem:CARA}
If a vertex $v$ is privileged in a configuration $\gamma_i$ (with $0\leq i<diam(g)$), then $v$ cannot execute rules $CA$ and $RA$ in $e_i$.
\end{lemma}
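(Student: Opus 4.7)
The plan is to proceed by contradiction: assume $v$ is privileged in $\gamma_i$ (so $r^i_v = 2n + 2\cdot diam(g)\cdot id_v \ge 2n$) with $0 \le i < diam(g)$, yet $v$ fires rule $CA$ or $RA$ in some transition $(\gamma_j, \gamma_{j+1})$ with $j \in \{0, \ldots, i-1\}$. The goal is to show that this forces $r^i_v$ to be strictly smaller than $2n$, contradicting the privileged condition.

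First, I would pin down $r^{j+1}_v$. Rule $RA$ simply assigns $r_v := -n$, so $r^{j+1}_v = -n$. Rule $CA$'s guard requires $r^j_v \in init^*_\mathcal{X} = \{-n, \ldots, -1\}$, and its action applies $\phi$, giving $r^{j+1}_v = r^j_v + 1 \in \{-n+1, \ldots, 0\}$. In both cases $r^{j+1}_v \le 0$.

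Second, I would bound $r^i_v$ from above. Since $e$ is synchronous, $v$ fires at most one rule in each of the $i - j - 1$ transitions from $\gamma_{j+1}$ to $\gamma_i$. Rule $RA$ resets $r_v$ to $-n$, which cannot raise the clock; rules $NA$ and $CA$ both apply $\phi$, which adds $1$ to any value strictly below $K - 1$ and wraps $K - 1$ to $0$. Because the clock size satisfies $K - 1 = (2n-1)(diam(g)+1) + 1 \ge 2n$ while only $i - j - 1 < diam(g) \le n - 1 < 2n$ increments are available after $\gamma_{j+1}$, the value $K - 1$ is never reached and no wrap occurs. Hence $r^i_v \le r^{j+1}_v + (i - j - 1) \le 0 + (i - 1) < diam(g) - 1$.

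Finally, using $diam(g) \le n - 1$, I conclude $r^i_v < n - 2 < 2n$, contradicting $r^i_v \ge 2n$. The only real subtlety, and therefore the main obstacle, is the bookkeeping in the second step: one must verify that neither the wrap-around of $\phi$ nor intervening firings of $RA$ can accelerate the growth of $r_v$ within the tight $i < diam(g)$ budget. Both are foreclosed by the explicit choice $K = (2n-1)(diam(g) + 1) + 2$, which makes the clock large enough to separate post-reset values from any privileged value on the timescale of the diameter.
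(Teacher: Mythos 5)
Your proof is correct and follows essentially the same route as the paper's: pin down $r^{j+1}_v \le 0$ after the offending $CA$/$RA$ firing and show that the fewer than $diam(g)$ remaining transitions cannot lift the register to the privileged value $2n + 2\cdot diam(g)\cdot id_v \ge 2n$. The only cosmetic difference is that the paper chooses $j$ to be the \emph{last} such firing (so that only $NA$ can occur afterwards), whereas you bound the per-step growth uniformly, which correctly forecloses both intervening $RA$ resets and wrap-around of $\phi$.
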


\begin{proof}
As the result is obvious for $i=0$, let $\gamma_i$ (with $0<i<diam(g)$) be a configuration such that a vertex $v$ is privileged in $\gamma_i$. Then, we have by definition that $r^i_v=2.n+2.diam(g).id_v$.

By contradiction, assume that $v$ executes at least once rule $CA$ or $RA$ in $e_i$. Let $j$ be the biggest integer such that $v$ executes rule $CA$ or $RA$ during action $(\gamma_j,\gamma_{j+1})$ with $j<i$. 

Assume that $v$ executes rule $RA$ during $(\gamma_j,\gamma_{j+1})$. Then, we have $r^{j+1}_v=-n$. From this point, only rule $CA$ may be enabled at $v$ but $v$ does not execute it by construction of $j$. Then, we can deduce that $r^i_v=-n$ that is contradictory.

Hence, we know that $v$ executes rule $CA$ during $(\gamma_j,\gamma_{j+1})$. Consequently, we have $r^{j+1}_v\in init_\mathcal{X}$ by construction of the rule. As $v$ can only execute rule $NA$ between $\gamma_{j+1}$ and $\gamma_i$ by construction of $j$, we can deduce that $r^i_v\in init_\mathcal{X}\cup\{0,\ldots,0+i-(j+1)\}$. As $ 0+i-(j+1)<diam(g)$, this contradiction proves the result.
\end{proof}

\begin{lemma}\label{lem:zeroIsland}
If a vertex $v$ is privileged in a configuration $\gamma_i$ (with $0\leq i<diam(g)$), then $v$ cannot belong to a zero-island in any configuration of $e_i$.
\end{lemma}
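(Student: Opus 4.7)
My plan is a proof by contradiction. Suppose some $\gamma_j$ in $e_i$ places the privileged vertex $v$ inside a zero-island $I$; the aim is to show that the clock value $r^j_v$ must simultaneously be far from $0$ (because of how $v$ may update its clock) and within $diam(g)$ of $0$ in $d_K$ (because of the structure of $I$), which is impossible.

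First I would trace $v$'s clock backward through $e_i$. By Lemma \ref{lem:CARA}, $v$ executes neither $CA$ nor $RA$, so its only action on $r_v$ is $NA$, which requires and preserves $r_v \in stab_\mathcal{X}$ and increments it by $1$ modulo $K$. A short preliminary observation is that $r^0_v \in stab_\mathcal{X}$ must already hold: otherwise $v$ would be permanently idle on $e_i$ and could not reach the privileged value. Starting from $r^i_v = 2n + 2\cdot diam(g)\cdot id_v \geq 2n$ and undoing at most $i - j < diam(g)$ increments, one gets $r^j_v = r^i_v - \ell$ for some integer $\ell \in [0, i-j]$, without modular wrap, since $r^i_v \geq 2n > diam(g) > \ell$.

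Next I would show that $r^j_v$ sits strictly in the interval $(diam(g),\; K - diam(g))$, so that $d_K(r^j_v, 0) > diam(g)$. The lower side uses $r^j_v \geq 2n - diam(g) + 1 > diam(g)$, which follows from $diam(g) \leq n - 1$. The upper side is the delicate part: since $id_v \leq n - 1$, we have $r^j_v \leq r^i_v \leq 2n + 2\cdot diam(g)(n-1)$, and plugging in $K = (2n-1)(diam(g)+1) + 2$ gives $(K - diam(g)) - r^i_v \geq 1$ after a one-line algebraic check.

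Finally I would derive the contradiction from the zero-island hypothesis. There is a vertex $w \in I$ with $r^j_w = 0$; inside $I$ (restricting to the connected component of $v$ if necessary) a path $v = u_0, u_1, \ldots, u_l = w$ of length $l \leq diam(g)$ joins them. Each edge of this path satisfies the $correct$ predicate, so $d_K(r^j_{u_k}, r^j_{u_{k+1}}) \leq 1$, and the triangle inequality for $d_K$ yields $d_K(r^j_v, 0) \leq l \leq diam(g)$, contradicting the previous bound. The main technical obstacle will be the clock arithmetic in the middle step: one has to exploit the exact choice $K = (2n-1)(diam(g)+1) + 2$, which is calibrated precisely so that all $n$ privileged values land strictly between $diam(g)$ and $K - diam(g)$ on the clock, leaving no room for a zero neighborhood to reach them.
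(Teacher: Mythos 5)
Your proof is correct and follows essentially the same route as the paper's: invoke Lemma \ref{lem:CARA} to restrict $v$ to $NA$-steps, use the zero-island to place $r^j_v$ within $d_K$-distance $diam(g)$ of $0$, and use the calibration of $K$ to show the privileged value cannot be reached from there in fewer than $diam(g)$ increments. The only (cosmetic) difference is that you propagate the clock constraint backward from $\gamma_i$ to $\gamma_j$, whereas the paper propagates it forward from $\gamma_j$ to $\gamma_i$; the arithmetic and the contradiction are the same.
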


\begin{proof}
Let $\gamma_i$ (with $0\leq i<diam(g)$) be a configuration such that a vertex $v$ is privileged in $\gamma_i$. Then, we have by definition that $r^i_v=2.n+2.diam(g).id_v$. 

By contradiction, assume that there exists some configurations of $e_i$ such that $v$ belongs to a zero-island. Let $j$ be the biggest integer such that $v$ belongs to a zero-island $I$ in $\gamma_j$ with $j\leq i$.

By definition of a zero-island, we know that there exists a vertex $u$ in $I$ such that $r^j_u=0$. As $dist(g,u,v)\leq diam(g)$ and $u$ and $v$ belongs to the same island in $\gamma_j$, we have $d_K(r^j_u,r^j_v)\leq diam(g)$. By construction of the clock, we have so $r^j_v\in\{(2.n-2)(diam(g)+1)+3,\ldots,0,\ldots,diam(g)\}$.

By Lemma \ref{lem:CARA}, we know that $v$ may execute only rule $NA$ between $\gamma_j$ and $\gamma_j$. Then, we have $r^i_v\in\{(2.n-2)(diam(g)+1)+3,\ldots,0,\ldots,diam(g)+(i-j)\}$. As $diam(g)+(i-j)<2.diam(g)$, $v$ cannot be privileged in $\gamma_i$ (whatever is its identity). This contradiction proves the result.
\end{proof}

\begin{lemma}\label{lem:nonZeroIsland}
If a vertex $v$ belongs to a non-zero-island of depth $k\geq 0$ in a configuration $\gamma_i$ (with $0<i<diam(g)$), then $v$ belongs either to a non-zero-island of depth greater or equals to $k+1$ or to a zero-island in $\gamma_{i-1}$.
\end{lemma}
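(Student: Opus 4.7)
The plan is to track how the non-zero island $I$ relates to the island $I'$ containing $v$ in $\gamma_{i-1}$, arguing that the border of $I$ is pushed strictly inward relative to the border of $I'$. First, every $w \in I$ was either inactive or executed rule $NA$ during the step $(\gamma_{i-1},\gamma_i)$: rule $RA$ would send the clock to $-n \notin stab_{\mathcal{X}}$, and rule $CA$ maps $init^*_{\mathcal{X}}$ into $\{-n+1,\ldots,0\}$; both contradict $r^i_w \in stab_{\mathcal{X}} \setminus \{0\}$ (the latter using that $I$ is non-zero). In particular $r^{i-1}_w \in stab_{\mathcal{X}}$. Second, internal correctness of $I$ is preserved in $\gamma_{i-1}$: for an internal edge $(u,w)$, if both or neither endpoint executed $NA$ then $d_K$ is unchanged, while if exactly one, say $u$, executed $NA$, its local-minimality guard gives $r^{i-1}_u \leq_l r^{i-1}_w$, hence $d_K \leq 1$. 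Since $\gamma_{i-1}$ cannot be legitimate (otherwise $\gamma_i$ would be legitimate too and $I = V$ would contradict $I \subsetneq V$), $v$ belongs to some island $I' \supseteq I$ of $\gamma_{i-1}$.

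If $I'$ is a zero-island, the first alternative of the lemma holds directly. Otherwise every $w \in I'$---and hence every $w \in I$---satisfies $r^{i-1}_w \in stab_{\mathcal{X}} \setminus \{0\}$, which places $r^{i-1}_w$ outside $init_{\mathcal{X}}$. Combined with the absence of $RA$-executions by $I$-vertices, this forces $allCorrect_w$ in $\gamma_{i-1}$ for every $w \in I$: when $w$ executed $NA$ the guard of $NA$ supplies it, and when $w$ was inactive, the non-triggering of $RA$ combined with $r^{i-1}_w \notin init_{\mathcal{X}}$ does. Hence every graph-neighbor $y$ of $w$ satisfies $r^{i-1}_y \in stab_{\mathcal{X}}$ and $d_K(r^{i-1}_w, r^{i-1}_y) \leq 1$, so $y$ lies in $I'$ by maximality. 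Therefore no vertex of $I$ sits on $border(I')$, i.e., $I \subseteq I' \setminus border(I')$.

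To conclude, pick $w^* \in I$ realizing $dist(g, w^*, border(I)) = k$. A routine induction on the radius, using the definition of depth, shows that the closed ball of radius $k$ around $w^*$ in $g$ lies entirely inside $I$; combined with the previous inclusion this ball sits in $I' \setminus border(I')$, so $dist(g, w^*, border(I')) \geq k+1$ and hence $depth(I') \geq k+1$. The main obstacle is precisely the containment $I \subseteq I' \setminus border(I')$ in the non-zero case: the bare inclusion $I \subseteq I'$ alone only yields $depth(I') \geq k$, falling one short of what we need. Closing this $+1$ gap is what makes the joint use of the non-zero hypothesis on $I'$ (to keep $r^{i-1}_w$ out of $init_{\mathcal{X}}$) and the ban on $RA$-steps for $I$-vertices essential, so that $allCorrect_w$ holds throughout $\gamma_{i-1}$ and drags every graph-neighbor of $w$ into $I'$.
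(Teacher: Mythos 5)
Your proof is correct and follows essentially the same route as the paper's: you place $v$ in an island $I'$ of $\gamma_{i-1}$ by ruling out $CA$/$RA$ steps, dispose of the zero-island case, and then exploit the fact that under the synchronous daemon the border vertices of a non-zero island must fire $RA$ (equivalently, that the vertices surviving in $I$ satisfy $allCorrect_w$ in $\gamma_{i-1}$ and hence lie off $border(I')$). The only difference is presentational: the paper argues by contradiction from $depth(I')\leq k$, whereas you establish $I\subseteq I'\setminus border(I')$ directly and read off $depth(I')\geq k+1$ via the ball argument, thereby spelling out the containment $I\subseteq I'$ and the depth computation that the paper leaves implicit.
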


\begin{proof}
Let $\gamma_i$ (with $0<i<diam(g)$) be a configuration such that a vertex $v$ belongs to a non-zero-island $I$ of depth $k\geq 0$ in $\gamma_i$.

Assume that $v$ does not belongs to any island in $\gamma_{i-1}$. In other words, we have $r^{i-1}_v\in init^*_\mathcal{X}$. Consequently, $v$ may only execute rule $CA$ during action $(\gamma_{i-1},\gamma_i)$ and we have $r^{i}_v\in init_\mathcal{X}$. This means that $v$ either belongs to a zero-island or does not belong to any island in $\gamma_i$. This contradiction shows us that $v$ belongs to an island in $\gamma_{i-1}$.

If $v$ belongs to a zero-island in $\gamma_{i-1}$, we have the result. Otherwise, assume by contradiction that $v$ belongs to a non-zero island $I'$ such that $depth(I')\leq k$ in $\gamma_{i-1}$. By definition of a non-zero-island, all vertices of $border(I')$ are enabled by rule $RA$ in $\gamma_{i-1}$. As we consider a synchronous execution, we obtain that $I$ (the non-zero-island that contains $v$ in $\gamma_i$) satisfies $depth(I)<k$. This contradiction shows the lemma.
\end{proof}

\begin{lemma}\label{lem:gamma0}
If $\gamma_0\notin\Gamma_1$, then any vertex $v$ satisfies $r^{diam(g)}_v\in init_\mathcal{X}\cup\{(2.n-2)(diam(g)+1)+3,\ldots,0,\ldots,2.diam(g)-1\}$.
\end{lemma}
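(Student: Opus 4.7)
My plan is to fix an arbitrary vertex $v$ and control the trajectory $(r^i_v)_{i=0}^{diam(g)}$ by splitting on whether $r_v$ ever visits $init_\mathcal{X}$ during $e_{diam(g)}$. Write $W:=\{(2n-2)(diam(g)+1)+3,\ldots,K-1,0,\ldots,2\cdot diam(g)-1\}$ for the claimed window in the cyclic $stab_\mathcal{X}$ and set $j^\star:=\max\{i\in[0,diam(g)]:r^i_v\in init_\mathcal{X}\}$, with $j^\star=-\infty$ when no such $i$ exists.

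If $j^\star=diam(g)$, we are done. If $0\le j^\star<diam(g)$, inspecting rules $NA$, $CA$ and $RA$ shows that the only single-step transition out of $init_\mathcal{X}$ is $NA$ fired from value $0$, so $r^{j^\star}_v=0$ and $r^{j^\star+1}_v=1$. The maximality of $j^\star$ forbids any reset (which lands in $init_\mathcal{X}$) and any wrap $K-1\to 0$ (since $0\in init_\mathcal{X}$) on $[j^\star+1,diam(g)]$, so only $NA$ fires and $r^{diam(g)}_v\le diam(g)-j^\star\le diam(g)\in W$.

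The core case is $j^\star=-\infty$: $r^i_v\in\{1,\ldots,K-1\}$ throughout, so $v$ never resets and never wraps, and $allCorrect_v$ must hold at every $\gamma_i$ with $i<diam(g)$ (else $RA$ would be enabled because $r_v\notin init_\mathcal{X}$). Since $\gamma_0\notin\Gamma_1$, $v$ belongs to an island $I_0$ at $\gamma_0$. If $I_0$ were a non-zero-island, iterating Lemma \ref{lem:nonZeroIsland} backward from step $diam(g)-1$ would make the depth grow by at least one per step, contradicting the upper bound $depth(\cdot)\le diam(g)-1$ for any strict subset island; equivalently, $v$ is forced onto the border of a non-zero-island by step $diam(g)-1$ and hence into a reset by step $diam(g)$, contradicting the case. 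Thus $I_0$ is a zero-island and, as in the proof of Lemma \ref{lem:zeroIsland}, $d_K(r^0_v,0)\le diam(g)$, i.e.\ $r^0_v\in\{K-diam(g),\ldots,K-1\}\cup\{1,\ldots,diam(g)\}$.

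It remains to turn this $d_K$ bound into the window bound. For $r^0_v\in\{K-diam(g),\ldots,K-1\}$ the ``no wrap'' constraint caps the number of $NA$ firings by $K-1-r^0_v\le diam(g)-1$, so $r^{diam(g)}_v\le K-1\in W$. For $r^0_v\in\{1,\ldots,diam(g)\}$ the naive bound $r^0_v+diam(g)$ reaches $2\cdot diam(g)$, one above the window's top. The off-by-one is recovered by the synchronous wave propagation of unison: a within-island realisation of $d_K(r^0_v,0)$ from $v$ to a $0$-vertex supplies $v$ with a neighbour in $I_0$ at value $r^0_v-1$, which destroys the local-minimum status of $v$ and forces it to skip at least one $NA$ step during $e_{diam(g)}$, yielding $r^{diam(g)}_v\le r^0_v+(diam(g)-1)\le 2\cdot diam(g)-1\in W$. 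The main obstacle is exactly this last off-by-one tightening: turning the wave intuition into a rigorous local-order statement using $\le_l$ along a within-island shortest path is where the synchronous hypothesis is invoked beyond the mere peeling of non-zero-islands.
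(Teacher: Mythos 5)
Your decomposition by the last visit to $init_\mathcal{X}$ is a legitimate variant of the paper's case analysis (which cases instead on the last non-$NA$ rule fired), and your first two cases are correct. The core case, however, has two genuine gaps. First, the conclusion ``thus $I_0$ is a zero-island'' does not follow from your erosion argument. Lemma \ref{lem:nonZeroIsland} only forces the depth of the non-zero-island containing $v$ to shrink as time advances; it does not prevent $v$ from starting in a non-zero-island at $\gamma_0$ and \emph{joining} a zero-island at some later step $j\geq 1$ without ever resetting (for instance when the island's values increment through $K-1$ and reach $0$, turning a non-zero-island into a zero-island). What your argument actually yields is that $v$ belongs to a zero-island at \emph{some} $\gamma_j$ with $0\leq j\leq diam(g)-1$, not necessarily at $\gamma_0$; for $j\geq 1$ the desired bound then follows, since $d_K(r^j_v,0)\leq diam(g)$ leaves only $diam(g)-j\leq diam(g)-1$ further increments, and only the subcase $j=0$ is problematic --- which is exactly where your second gap lies.

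Second, the off-by-one patch is unsound. Membership of $v$ in a zero-island at $\gamma_0$ gives a within-island path from $v$ to a $0$-valued vertex along which consecutive values are locally comparable; it does not give $v$ a neighbour at value $r^0_v-1$. The first edge of that path may keep the value constant or increase it (the path need not realise $d_K(r^0_v,0)$, and even a realising path need not start by decreasing). Worse, if $v$ fires $NA$ at step $0$ then $v$ is a local minimum with respect to $\leq_l$, so it has \emph{no} neighbour at $r^0_v-1$: your claimed witness cannot coexist with the very step you are trying to suppress. The paper obtains the missing $-1$ differently: it locates the zero-island membership at time $i=\min\{dist(g,v,v')\mid v'\in V'\}\geq 1$, where $V'$ is the set of vertices enabled by $RA$ in $\gamma_0$ --- the reset wave needs at least one synchronous step to reach $v$'s neighbourhood --- so the bound $d_K(r^i_v,0)\leq diam(g)$ only becomes available after at least one step has already elapsed, leaving at most $diam(g)-1$ increments. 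You would need to import that argument, or find another source for the skipped step, to close the case.
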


\begin{proof}
Assume that $\gamma_0\notin\Gamma_1$. Then, by definition of $\Gamma_1$ and by the construction of the protocol, we know that there exists a set $\emptyset\neq V'\subseteq V$ such that vertices of $V'$ are enabled by rule $RA$ in $\gamma_0$. Let $v$ be an arbitrary vertex of $V$.

If $v$ executes at least once the rule $RA$ during $e_{diam(g)}$, let $i$ be the biggest integer such that $v$ executes rule $RA$ during $(\gamma_i,\gamma_{i+1})$ with $i<diam(g)$. Then, we have $r^{i+1}_v=-n$. As $diam(g)-(i+1)<n$, we can deduce that $v$ may execute only rule $CA$ between $\gamma_i$ and $\gamma_{diam(g)}$. Consequently, we have $r^{diam(g)}_v\in init_\mathcal{X}$.

If $v$ executes at least once the rule $CA$ but never executes rule $RA$ during $e_{diam(g)}$, let $i$ be the biggest integer such that $v$ executes rule $CA$ during $(\gamma_i,\gamma_{i+1})$ with $i<diam(g)$. Then, we have $r^{i+1}_v\in init_\mathcal{X}$. By construction of $i$, we can deduce that $v$ may execute only rule $NA$ between $\gamma_i$ and $\gamma_{diam(g)}$. As $diam(g)-(i+1)<diam(g)$, we have $r^{diam(g)}_v\in init_\mathcal{X}\cup\{0,\ldots,diam(g)-1\}$.

Otherwise ($v$ executes only rule $NA$ during $e_{diam(g)}$), let $i$ be the integer defined by $i=min\{dist(g,v,v')|v'\in V'\}$. Note that $0<i\leq diam(g)$ by construction (recall that $v\notin V'$). We can deduce that $v$ belongs to a zero-island in $\gamma_i$ (otherwise, $v$ executes rule $RA$ or $CA$ during $(\gamma_i,\gamma_{i+1})$). By definition of a zero-island, we have then $r^{i}_v\in\{(2.n-2)(diam(g)+1)+3,\ldots,0,\ldots diam(g)\}$. As $v$ may execute only rule $NA$ between $\gamma_i$ and $\gamma_{diam(g)}$ and $diam(g)-i<diam(g)$, we can deduce that $r^{diam(g)}_v\in \{(2.n-2)(diam(g)+1)+3,\ldots,0,\ldots,2.diam(g)-1\}$.
\end{proof}

\begin{theorem}\label{th:stabsd}
$conv\_time(\mathcal{SSME},sd)\leq\left\lceil\frac{diam(g)}{2}\right\rceil$
\end{theorem}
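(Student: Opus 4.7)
The statement says $\gamma_{\lceil diam(g)/2\rceil}$ must be a safe configuration for $spec_{ME}$ under $sd$: since $sd$ is deterministic, the unique synchronous suffix it launches must satisfy both safety and liveness. If $\gamma_0\in\Gamma_1$ the stabilization time is $0$ via Theorem~\ref{th:correctness}, so I would focus on the non-trivial case $\gamma_0\notin\Gamma_1$. Liveness of the suffix is inherited from Theorem~\ref{th:correctness} applied under $sd\preccurlyeq ufd$: the unison component eventually reaches $\Gamma_1$, every clock is then incremented infinitely often, and each vertex attains its privileged value $2.n+2.diam(g).id_v$ infinitely often.

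The substantive task is safety: for every $j\geq\lceil diam(g)/2\rceil$ at most one vertex is privileged in $\gamma_j$. The key is \emph{reindexing}: an alleged violation at step $j$ becomes a violation at step $\lceil diam(g)/2\rceil$ of the synchronous sub-execution $e'$ starting at $\gamma_{j-\lceil diam(g)/2\rceil}$, and $\lceil diam(g)/2\rceil<diam(g)$ places us squarely within the domain of Lemmas~\ref{lem:CARA}--\ref{lem:nonZeroIsland}. If the initial configuration of $e'$ already lies in $\Gamma_1$, Theorem~\ref{th:correctness} immediately forbids two simultaneously privileged vertices. Otherwise, suppose $u\neq v$ are privileged at step $\lceil diam(g)/2\rceil$ of $e'$. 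Lemma~\ref{lem:CARA} bars them from executing $CA$ or $RA$; combined with the structural observation that $r_w\in stab_\mathcal{X}$ and $\neg allCorrect_w$ synchronously enable $RA$ (and thus force it under $sd$), this forces $allCorrect$ to hold continuously at $u$ and at $v$ throughout $e'_{\lceil diam(g)/2\rceil}$. Hence both vertices sit in islands at every step of $e'_{\lceil diam(g)/2\rceil}$, and by Lemma~\ref{lem:zeroIsland} these islands are non-zero. Reading Lemma~\ref{lem:nonZeroIsland} as ``one synchronous step peels off the border of a non-zero-island'' and iterating it, I would conclude that at $\gamma'_0:=\gamma_{j-\lceil diam(g)/2\rceil}$ both $u$ and $v$ lie at graph-distance at least $\lceil diam(g)/2\rceil$ from the border of the non-zero-island containing them.

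The argument then branches on whether the islands $I^0_u$ and $I^0_v$ coincide. If they differ, every shortest graph-path from $u$ to $v$ must first traverse a border vertex of $I^0_u$ (at distance $\geq\lceil diam(g)/2\rceil$ from $u$) and later a disjoint border vertex of $I^0_v$ (at distance $\geq\lceil diam(g)/2\rceil$ from $v$), so $dist(g,u,v)\geq 2\lceil diam(g)/2\rceil+1>diam(g)$, contradiction. If $I^0_u=I^0_v=:I$, I would plug $r^0_w=2.n+2.diam(g).id_w-\ell_w$ with $\ell_w\in[0,\lceil diam(g)/2\rceil]$ (the number of $NA$ executions of $w$ in $e'_{\lceil diam(g)/2\rceil}$) and the specific value $K=(2.n-1)(diam(g)+1)+2$ into the definition of $d_K$ to verify that $d_K(r^0_u,r^0_v)>diam(g)$. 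Since clocks vary by at most one per edge inside an island, every $u$-to-$v$ path inside $I$ has length $>diam(g)$, so a shortest graph-path $P$ from $u$ to $v$ must contain a vertex outside $I$; the first exit and last re-entry of $P$ are then at distance $\geq\lceil diam(g)/2\rceil+1$ from $u$ and from $v$ respectively, giving $|P|\geq 2\lceil diam(g)/2\rceil+2>diam(g)$, again a contradiction.

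The main obstacle is the same-island case. Its clock-arithmetic estimate is where the specific form of $K$ really earns its keep (one has to check both that the integer difference exceeds $diam(g)$ and that its complement to $K$ also exceeds $diam(g)$, despite the $\ell_u-\ell_v$ correction), and the subsequent conversion of the island-path lower bound into a graph-distance contradiction hinges on reading Lemma~\ref{lem:nonZeroIsland} in the stronger ``$u$ itself is deep in its island'' form obtained from the border-peeling interpretation, rather than merely the weaker ``the island is deep'' form that the statement literally provides.
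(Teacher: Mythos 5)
Your plan is sound and reaches the theorem by a genuinely different route from the paper's. The paper keeps the original time index of the alleged violation and therefore needs three cases: $\left\lceil diam(g)/2\right\rceil\leq i<diam(g)$ (handled with Lemmas~\ref{lem:CARA}--\ref{lem:nonZeroIsland}), $diam(g)\leq i<2.n+diam(g)$ (handled with Lemma~\ref{lem:gamma0}), and $i\geq 2.n+diam(g)$ (handled by importing the synchronous stabilization bound $\alpha+lcp(g)+diam(g)$ of the underlying unison from the literature). Your reindexing trick --- viewing a violation at step $j$ as a violation at step exactly $\left\lceil diam(g)/2\right\rceil$ of the synchronous execution launched from $\gamma_{j-\left\lceil diam(g)/2\right\rceil}$, which is legitimate since the lemmas are stated for synchronous executions from \emph{arbitrary} initial configurations --- collapses all three cases into the first one, and in doing so dispenses with both Lemma~\ref{lem:gamma0} and the external unison bound. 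The endgame also differs: where the paper concludes that $u$ and $v$ share a non-zero-island in $\gamma_0$, asserts $d_K(r^0_u,r^0_v)\leq diam(g)$, and derives a numeric contradiction from the triangle inequality for $d_K$ (split on $id_v-id_u\geq 2$ versus $=1$), you turn the same clock arithmetic into a graph-distance contradiction $dist(g,u,v)>diam(g)$ via the geometry of islands. Three caveats. First, your claim that $\left\lceil diam(g)/2\right\rceil<diam(g)$ fails for $diam(g)=1$, so the complete-graph case must be dispatched separately (the paper's Case~2 covers it); this is easy but should be said. Second, you correctly identify that you need the per-vertex strengthening of Lemma~\ref{lem:nonZeroIsland} (``$u$ itself is at distance $\geq\left\lceil diam(g)/2\right\rceil$ from the border'') rather than the stated island-depth form; this does follow from the border-peeling argument in that lemma's proof, and it is worth noting that the paper's own Case~1 tacitly relies on the same strengthening when it concludes that $u$ and $v$ lie in a common island of $\gamma_0$. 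Third, the decisive arithmetic with $K=(2.n-1)(diam(g)+1)+2$ is only promised, not carried out, but it does go through (both the integer gap and its complement modulo $K$ exceed $diam(g)$ once $diam(g)\geq 2$). A minor plus: you address the liveness half of $spec_{ME}$ explicitly, which the paper's proof leaves implicit.
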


\begin{proof}
By contradiction, assume that $conv\_time(\mathcal{SSME},sd)>\left\lceil\frac{diam(g)}{2}\right\rceil$. This means that there exists a configuration $\gamma_0$ such that the synchronous execution $e=(\gamma_0,\gamma_1)(\gamma_1,\gamma_2)\ldots$ of $\mathcal{SSME}$ satisfies: there exists an integer $i\geq\left\lceil\frac{diam(g)}{2}\right\rceil$ and two vertices $u$ and $v$ such that $u$ and $v$ are simultaneously privileged in $\gamma_i$. Let us study the following cases (note that they are exhaustive):

\begin{description}
\item[Case 1:] $\left\lceil\frac{diam(g)}{2}\right\rceil\leq i<diam(g)$

By Lemma \ref{lem:CARA}, we know that $u$ may execute only rule $NA$ in $e_i$. This implies that $\forall j\leq i, r^j_u\in stab_\mathcal{X}$ and then $d_K(r^i_u,r^0_u)\leq i$. By the same way, we can prove that $d_K(r^i_v,r^0_v)\leq i$.

If $u$ is privileged in $\gamma_i$, this means that $r^i_u\in stab_\mathcal{X}$ and $d_K(r^i_u,0)>diam(g)$. As $u$ and $v$ are simultaneously privileged in $\gamma_i$, we have by definition that $d_K(r^i_u,r^i_v)>diam(g)$. This implies that $\gamma_i\notin \Gamma_1$ and that $u$ belongs to a non-zero-island $I$ such that $depth(I)\geq 1$ in $\gamma_i$. By recursive application of Lemmas \ref{lem:zeroIsland} and \ref{lem:nonZeroIsland}, we deduce that $u$ belongs to a non-zero-island $I'$ such that $depth(I')\geq i+1\geq \left\lceil\frac{diam(g)}{2}\right\rceil+1$ in $\gamma_0$. The same property holds for $v$. As $dist(g,u,v)\leq diam(g)$, we can deduce that $u$ and $v$ belongs to the same non-zero-island in $\gamma_0$, that allows us to state $d_K(r^0_u,r^0_v)\leq diam(g)$. 

Without loss of generality, assume that $id_u<id_v$. Let us now distinguish the following cases:

If $id_v-id_u\geq 2$, as $u$ and $v$ are simultaneously privileged in $\gamma_i$, we have $d_K(r^i_u,r^i_v)\geq 2.n+diam(g)+1$ (if $id_u=n-1$ and $id_v=0$) or $d_K(r^i_u,r^i_v)\geq 4.diam(g)$ (otherwise). Note that in both cases, we have $d_K(r^i_u,r^i_v)\geq 3.diam(g)$. Recall that $d_K$ is a distance. In particular, it must satisfy the triangular inequality. Then, we have $d_K(r^i_u,r^i_v)\leq d_K(r^i_u,r^0_u)+d_K(r^0_u,r^0_v)+d_K(r^0_u,r^i_v)$. By previous result, we obtain that $d_K(r^i_u,r^i_v)\leq diam(g)+2.i<3.diam(g)$, that is contradictory.

If $id_v-id_u=1$, by construction of $\gamma_i$, we have $r^i_u=2.n+2.diam(g).id_u>0$ and $r^i_v=2.n+2.diam(g).$ $(id_u+1)$. Then, we obtain $r^i_v-r^i_u=2.diam(g)$. Hence, we have $0<r^0_u\leq r^i_u<r^0_v\leq r^i_v$. Then, we can deduce from $r^i_v-r^i_u=2.diam(g)$ and $r^i_u-r^o_u\geq 0$ that $r^i_v-r^0_u\geq 2.diam(g)$. On the other hand, previous results show us that $r^0_v-r^0_u\leq diam(g)$ and $r^i_v-r^0_v<diam(g)$. It follows $r^i_v-r^0_u<2.diam(g)$, that is contradictory.

\item[Case 2:] $diam(g)\leq i<2.n+diam(g)$

As $u$ and $v$ are simultaneously privileged in $\gamma_i$, we have by definition that $d_K(r^i_u,r^i_v)>diam(g)$. This implies that $\gamma_i\notin \Gamma_1$ and then $\gamma_0\notin\Gamma_1$ (otherwise, we obtain a contradiction with the closure of $spec_{AU}$).

By Lemma \ref{lem:gamma0}, for any vertex $w$, $r^{diam(g)}_w\in init_\mathcal{X}\cup\{(2.n-2)(diam(g)+1)+3,\ldots,0,\ldots,2.diam(g)-1\}$. As $w$ may execute at most $i-diam(g)<2.n$ actions between $\gamma_{diam(g)}$ and $\gamma_i$, we can deduce that $r^{i}_w\in init_\mathcal{X}\cup\{(2.n-2)(diam(g)+1)+3,\ldots,0,\ldots,2.n+2.diam(g)-1\}$ for any vertex $w$.

By construction of the clock and the definition of the predicate $privileged$, we can conclude that there is at most one privileged vertex (the one with identity $0$) in $\gamma_i$, that is contradictory.

\item[Case 3:] $i\geq 2.n+diam(g)$

By \cite{BPV08j}, we know that $\mathcal{SSME}$ stabilizes to $spec_{AU}$ in at most $\alpha+lcp(g)+diam(g)$ steps under the synchronous daemon where $lcp(g)$ denotes the length of the longest elementary chordless path of $g$. As we have $\alpha=n$ by construction and $lcp(g)\leq n$ by definition, we can deduce that $\mathcal{SSME}$ stabilizes to $spec_{AU}$ in at most $2.n+diam(g)$ steps under the synchronous daemon.

In particular, this implies that $\gamma_i\in\Gamma_1$. Then, using proof of Theorem \ref{th:correctness}, we obtain a contradiction with the fact that $u$ and $v$ are simultaneously privileged in $\gamma_i$.
\end{description}

We thus obtain that  $conv\_time(\mathcal{SSME},sd)\leq\left\lceil\frac{diam(g)}{2}\right\rceil$.
\end{proof}

\paragraph{Unfair distributed daemon.} We now interested in the stabilization time of our mutual exclusion protocol under the unfair distributed daemon. Using a previous result from \cite{DP12c}, we have the following upper bound:

\begin{theorem}\label{th:stabufd}
$conv\_time(\mathcal{SSME},ufd)\in O(diam(g).n^3)$
\end{theorem}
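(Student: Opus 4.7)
The plan is to reduce the stabilization time of $\mathcal{SSME}$ for $spec_{ME}$ under $ufd$ to the stabilization time of the underlying asynchronous unison protocol of \cite{BPV04c} under $ufd$, and then to plug our choice of clock parameters into the generic bound of \cite{DP12c}. This is legitimate because $\mathcal{SSME}$ uses literally the same three guarded rules $NA$, $CA$ and $RA$ as the protocol of \cite{BPV04c}; only the clock size $K$ (and the value of $\alpha$) and the auxiliary predicate $privileged_v$ differ, and as already noted in Section~\ref{sub:protocol} the predicate $privileged$ is read-only and does not enable or disable any rule. Hence the set of executions of $\mathcal{SSME}$ under $ufd$ is exactly the set of executions of the BPV unison protocol instantiated with $\alpha=n$ and $K=(2n-1)(diam(g)+1)+2$.

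Next I would argue that once such an execution reaches a configuration $\gamma\in\Gamma_1$, it satisfies $spec_{ME}$ from $\gamma$ on. The safety argument is exactly the one used in the proof of Theorem~\ref{th:correctness}: in a configuration of $\Gamma_1$ any two vertices $u,v$ satisfy $d_K(r_u,r_v)\le diam(g)$, while the values triggering $privileged$ are $2n$-spaced along the clock, so by construction of $K$ at most one vertex is privileged. For liveness, the liveness of $spec_{AU}$ guarantees that each $r_v$ is incremented infinitely often within $stab_{\mathcal{X}}$, so each $v$ eventually hits $r_v = 2n + 2\,diam(g)\cdot id_v$ and, by the fairness property implicit in $spec_{AU}$, is activated and enters its critical section. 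Consequently, $conv\_time(\mathcal{SSME},ufd)$ is upper bounded by the $ufd$-stabilization time of the BPV unison toward $\Gamma_1$.

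The key quantitative ingredient is then the bound of \cite{DP12c} on the stabilization time of the BPV unison under the unfair distributed daemon, which is polynomial in $n$, $\alpha$ and $K$ (of the form $O(\alpha\cdot n\cdot K)$, up to constants). Substituting our parameters $\alpha = n$ and $K = (2n-1)(diam(g)+1)+2 \in O(n\cdot diam(g))$ yields an upper bound of $O(n\cdot n\cdot n\cdot diam(g)) = O(diam(g)\cdot n^{3})$ steps, which is exactly the statement of the theorem.

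The main obstacle, and really the only non-mechanical point, is locating the precise form of the stabilization-time bound in \cite{DP12c} and verifying that it applies verbatim to our instantiation: in particular one has to check that the conditions $\alpha\ge hole(g)-2$ and $K>cyclo(g)$ (which we already used in Theorem~\ref{th:correctness}) are exactly the hypotheses under which the \cite{DP12c} bound is proved, and that no extra assumption on the graph (such as rooted, identified, or uniformly degree-bounded) is silently invoked. Once this verification is done, the computation of the resulting asymptotic bound is immediate from the arithmetic above.
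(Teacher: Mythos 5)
Your proposal is correct and follows essentially the same route as the paper: bound the $spec_{ME}$ stabilization time by the $spec_{AU}$ stabilization time of the underlying unison (the safety/liveness argument once $\Gamma_1$ is reached is exactly that of Theorem~\ref{th:correctness}), then invoke the step-complexity bound of \cite{DP12c} with $\alpha=n$. The only discrepancy is your guessed form $O(\alpha\cdot n\cdot K)$ for that bound --- the paper quotes it as $2\,diam(g)\,n^3+(\alpha+1)n^2+(\alpha-2\,diam(g))n$, which does not involve $K$ --- but both evaluate to $O(diam(g)\cdot n^3)$ under the chosen parameters, so the conclusion stands.
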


\begin{proof}
Remind that the stabilization time of $\mathcal{SSME}$ for $spec_{AU}$ is an upper bound for the one for $spec_{ME}$ whatever the daemon is. The step complexity of this protocol is tricky to exactly compute. As the best of our knowledge, \cite{DP12c} provides the best known upper bound on this step complexity.

The main result of \cite{DP12c} is to prove that $\mathcal{SSME}$ stabilizes in at most $2.diam(g).n^3+(\alpha+1).n^2+(\alpha-2.diam(g)).n$ steps under $ufd$. Since we chose $\alpha=n$, we have the result.
\end{proof}

\section{Synchronous Lower Bound}\label{sec:lowerBound}

We prove here a lower bound on the stabilization time of mutual exclusion under a synchronous daemon, showing hereby that our speculatively stabilizing protocol presented in Section \ref{sub:protocol} is in this sense optimal. We introduce some definitions and a lemma.
 
\begin{definition}[Local state]
Given a configuration $\gamma$, a vertex $v$ and an integer $0\leq k\leq diam(g)$, the $k$-local state of $v$ in $\gamma$ (denoted by $\gamma_{v,k}$) is the configuration of the communication subgraph $g'=(V',E')$ induced by $V'=\{v'\in V|dist(g,v,v')\leq k\}$ defined by $\forall v'\in V', \gamma_{v,k}(v')=\gamma(v')$.
\end{definition}

Note that $\gamma_{v,0}=\gamma(v)$ by definition.

\begin{definition}[Restriction of an execution]
Given an execution $e=(\gamma_0,\gamma_1)(\gamma_1,\gamma_2)\ldots$ and a vertex $v$, the restriction of $e$ to $v$ (denoted by $e_v$) is defined by $e_v=(\gamma_0(v),\gamma_1(v))(\gamma_1(v),\gamma_2(v))\ldots$.
\end{definition}

\begin{lemma}\label{lem:lowerBound}
For any self-stabilizing distributed protocol $\pi$ for $spec_{ME}$ under the synchronous daemon and any pair of configuration $(\gamma,\gamma')$ such that there exists a vertex $v$ and an integer $1\leq k\leq diam(g)$ satisfying $\gamma_{v,k}=\gamma'_{v,k}$, the restrictions to $v$ of the prefixes of length $k$ of executions of $\pi$ starting respectively from $\gamma$ and $\gamma'$ are equals.
\end{lemma}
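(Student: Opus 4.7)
The plan is to proceed by induction on $k$, using the standard "light cone" argument: in a synchronous execution, information propagates at most one hop per step, so the state of $v$ after $k$ steps can depend only on the initial $k$-local state around $v$.

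For the base case $k=1$, assume $\gamma_{v,1} = \gamma'_{v,1}$. Then $\gamma(v) = \gamma'(v)$ and $\gamma(u) = \gamma'(u)$ for every $u \in neig(v)$. Under the synchronous daemon, whether $v$ is enabled and (if so) which action it executes depend only on its own state and the states of its neighbors, so the action $(\gamma(v),\gamma_1(v))$ at $v$ coincides with $(\gamma'(v),\gamma'_1(v))$. Thus the length-$1$ restrictions $e_v$ and $e'_v$ are equal.

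For the inductive step, assume the statement for $k-1$ and suppose $\gamma_{v,k} = \gamma'_{v,k}$ with $k \geq 2$. The key geometric observation is that the $(k-1)$-ball around any neighbor $u$ of $v$ is contained in the $k$-ball around $v$ (by the triangle inequality), and of course the $(k-1)$-ball around $v$ is contained in its $k$-ball. Hence $\gamma_{v,k-1} = \gamma'_{v,k-1}$ and $\gamma_{u,k-1} = \gamma'_{u,k-1}$ for every $u \in neig(v)$. Applying the induction hypothesis to $v$ and to each such $u$, the length-$(k-1)$ restrictions agree at $v$ and at every neighbor of $v$; in particular $\gamma_{k-1}(v) = \gamma'_{k-1}(v)$ and $\gamma_{k-1}(u) = \gamma'_{k-1}(u)$ for all $u \in neig(v)$. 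Since the action of $v$ during the $k$-th synchronous step is determined by exactly these values, we conclude $\gamma_k(v) = \gamma'_k(v)$, and concatenating with the induction hypothesis gives equality of the length-$k$ restrictions.

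There is no real obstacle here beyond book-keeping: the result holds for any deterministic distributed protocol and never uses $spec_{ME}$ or self-stabilization (these hypotheses are only needed in the theorem that applies this lemma). The only delicate point is making the containment of $(k-1)$-neighborhoods of neighbors inside the $k$-neighborhood of $v$ explicit, so that the induction hypothesis can be applied simultaneously to $v$ and to all its neighbors and then combined to reconstruct the last transition $(\gamma_{k-1},\gamma_k)$ as seen from $v$.
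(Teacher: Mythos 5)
Your proof is correct and follows essentially the same route as the paper: induction on $k$ with the base case handled by synchrony and locality, and the inductive step obtained by applying the induction hypothesis to $v$ and to each of its neighbors to recover the $1$-local state of $v$ after $k-1$ steps. The only difference is cosmetic — you make explicit the containment of the $(k-1)$-balls of the neighbors inside the $k$-ball of $v$, which the paper's proof uses implicitly.
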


\begin{proof}
Let $\pi$ be a self-stabilizing distributed protocol for $spec_{ME}$ under the synchronous daemon and $(\gamma,\gamma')$ two configurations such that there exists a vertex $v$ and an integer $1\leq k\leq diam(g)$ satisfying $\gamma_{v,k}=\gamma'_{v,k}$. We denote by $e=(\gamma,\gamma_1)(\gamma_1,\gamma_2)\ldots$ (respectively $e'=(\gamma',\gamma'_1)(\gamma'_1,\gamma'_2)\ldots$) the synchronous execution of $\pi$ starting from $\gamma$ (respectively $\gamma'$). We are going to prove the lemma by induction on $k$.

For $k=1$, we have $\gamma_{v,1}=\gamma'_{v,1}$, that is the state of $v$ and of its neighbors are identical in $\gamma$ and $\gamma'$. As the daemon is synchronous, we have $(e_1)_v=(e'_1)_v$, that implies the result.

For $k>1$, assume that the lemma is true for $k-1$. The induction assumption and the synchrony of the daemon allows us to deduce that $(e_{k-1})_v=(e'_{k-1})_v$ and $\forall u\in neig(v),(e_{k-1})_u=(e'_{k-1})_u$. Hence, we have $(\gamma_{k-1})_{v,1}=(\gamma'_{k-1})_{v,1}$. Then, by the same argument than in the case $k=1$, we deduce that $(\gamma_{k})_{v,0}=(\gamma'_{k})_{v,0}$, that implies the result.
\end{proof}

\begin{theorem}\label{th:lowerBound}
Any self-stabilizing distributed protocol $\pi$ for $spec_{ME}$ satisfies $conv\_time(\pi,sd)\geq\left\lceil\frac{diam(g)}{2}\right\rceil$.
\end{theorem}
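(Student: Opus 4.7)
The plan is to argue by contradiction using the indistinguishability provided by Lemma~\ref{lem:lowerBound}. Write $D = diam(g)$ and $h = \lceil D/2 \rceil$, and suppose for contradiction that some self-stabilizing protocol $\pi$ for $spec_{ME}$ has $conv\_time(\pi, sd) \leq k$ with $k = h-1$. First I would pick two vertices $u, v$ with $dist(g,u,v) = D$ and observe that $2k \leq D - 1 < D$, so the balls $B_k(u) = \{w \in V : dist(g,w,u) \leq k\}$ and $B_k(v)$ are disjoint subsets of $V$.

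Next I would build two reference configurations. Because $\pi$ is self-stabilizing and $spec_{ME}$ is live, from any starting configuration there are infinitely many times at which $u$ is privileged in the synchronous execution; by shifting the initial point of such an execution I obtain a $\gamma^1_0$ whose synchronous execution $e^1$ has $u$ privileged in $\gamma^1_k$. Symmetrically construct $\gamma^2_0$ and $e^2$ for $v$. Then I would splice them into a hybrid $\gamma^3_0$ that agrees with $\gamma^1_0$ on $B_k(u)$, with $\gamma^2_0$ on $B_k(v)$, and takes arbitrary but valid values elsewhere; this is well-defined by the disjointness. Two applications of Lemma~\ref{lem:lowerBound} with parameter $k$ (at $u$ with the pair $(\gamma^1_0,\gamma^3_0)$, and at $v$ with the pair $(\gamma^2_0,\gamma^3_0)$) show that the restriction to $u$ of the synchronous execution $e^3$ from $\gamma^3_0$ agrees with that of $e^1$ over the first $k$ steps, and symmetrically for $v$ versus $e^2$. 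To lift this to the predicate $privileged$ I would reapply the lemma at each neighbor of $u$ and of $v$, choosing the parameter so that the neighbor's relevant neighborhood still fits inside the matched region $B_k(u)$ (resp. $B_k(v)$). One then gets that the closed $1$-neighborhoods of $u$ and $v$ at time $k$ in $e^3$ coincide with those in $e^1$ and $e^2$ respectively, so both $privileged_u$ and $privileged_v$ hold in $\gamma^3_k$. This violates safety at time $k$ and contradicts $conv\_time(\pi,sd)\leq k$.

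The main obstacle is the step that lifts the state-level equality produced by Lemma~\ref{lem:lowerBound}, which only controls a single vertex's own state, to equality of the full closed $1$-neighborhood at time $k$, since the specification allows $privileged_v$ to depend on the variables of $v$'s neighbors. The chained application at each neighbor requires a tight accounting of whose $k'$-neighborhood still fits in the matched region, and the interplay between ``$v$'s state at time $k$'' (depth $k$) and ``$v$'s neighbors' states at time $k$'' (depth up to $k+1$) is delicate. The disjointness obtained from choosing $u, v$ at distance exactly $D$ is precisely what has to be leveraged for the argument to reach the full bound $\lceil D/2 \rceil$ rather than a weaker value, and matching this bound against the upper bound of Theorem~\ref{th:stabsd} is what makes the protocol $\mathcal{SSME}$ optimal.
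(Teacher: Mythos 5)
Your proposal is essentially the paper's own argument: assume the convergence time $t$ is below $\lceil diam(g)/2\rceil$, pick $u,v$ at distance $diam(g)$, use liveness to obtain configurations from which $u$ (resp.\ $v$) is privileged exactly $t$ synchronous steps later, splice their $t$-local states into a hybrid initial configuration (well-defined because the two radius-$t$ balls are disjoint), and apply Lemma~\ref{lem:lowerBound} twice to make both vertices privileged after the supposed stabilization time. The only divergence is your final ``lifting'' step: the paper does not perform it, reading $privileged_u$ directly off the equality of the restrictions to $u$, which is legitimate for predicates that depend only on the vertex's own variables (as in $\mathcal{SSME}$). Be aware that your proposed chained application at the neighbors does not quite deliver what you claim: applying Lemma~\ref{lem:lowerBound} with parameter $k-1$ at a neighbor $w$ of $u$ (so that $B_{k-1}(w)\subseteq B_k(u)$) only pins down $w$'s state through time $k-1$, so the closed $1$-neighborhood of $u$ is controlled at time $k-1$, not $k$; for genuinely neighbor-dependent $privileged$ predicates this costs one step in the bound, and avoiding that loss by matching on radius $k+1$ would require $2(k+1)\leq diam(g)-1$, weakening the result. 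This is a real subtlety of the model that the paper itself glosses over, so your instinct to flag it is sound, but the fix as stated has an off-by-one.
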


\begin{proof}
By contradiction, assume that there exists a self-stabilizing distributed protocol $\pi$ for $spec_{ME}$ such that $conv\_time(\pi,sd)<\left\lceil\frac{diam(g)}{2}\right\rceil$. For the sake of notation, let us denote $t=conv\_time(\pi,sd)$.

Given an arbitrary communication graph $g$, choose two vertices $u$ and $v$ such that $dist(g,u,v)=diam(g)$ and an arbitrary configuration $\gamma_0$. Denote by $e=(\gamma_0,\gamma_1)(\gamma_1,\gamma_2)\ldots$ the synchronous execution of $\pi$ starting from $\gamma_0$.

By definition, $e$ contains an infinite suffix in which $u$ (respectively $v$) executes infinitely often its critical section. Hence, there exists a configuration $\gamma_i$ (respectively $\gamma_j$) such that $u$ (respectively $v$) is privileged in $\gamma_i$ (respectively $\gamma_j$) and $i>t$ (respectively $j>t$).

As $t<\left\lceil\frac{diam(g)}{2}\right\rceil$ and $dist(g,u,v)=diam(g)$, there exists at least one configuration $\gamma'_0$ such that $(\gamma'_0)_{u,t}=(\gamma_{i-t})_{u,t}$ and $(\gamma'_0)_{v,t}=(\gamma_{j-t})_{v,t}$. Let $e'=(\gamma'_0,\gamma'_1)(\gamma'_1,\gamma'_2)\ldots$ be the synchronous execution of $\pi$ starting from $\gamma'_0$.

By Lemma \ref{lem:lowerBound}, we can deduce that the restriction to $u$ of the prefix of length $t$ of $e'$ is the same as the one of the suffix of $e$ starting from $\gamma_{i-t}$. In particular, $u$ is privileged in $\gamma'_t$. By the same way, we know that $v$ is privileged in $\gamma'_t$. This contradiction leads to the result.
\end{proof}

\section{Conclusion}\label{sec:conclusion}

This paper studies for the first time the notion of speculation in self-stabilization. As the main measure in this context is the stabilization time, we naturally consider that a speculatively stabilizing protocol is a self-stabilizing protocol for a given adversary that exhibits moreover a better stabilization time under another (and weaker) adversary. This weaker adversary captures a subset of most probable executions for which the protocol is optimized.

To illustrate this approach, we consider the seminal problem of Dijkstra on self-stabilization: mutual exclusion. We provide a new self-stabilizing mutual exclusion protocol. We prove then that this protocol has an optimal stabilization time in synchronous executions.

Our paper opens a new path of research in self-stabilization by considering the stabilization time of a protocol as a function of the adversary and not as a single value. As a continuation, one could naturally apply our new notion of speculative stabilization to other classical problems of distributed computing and provide speculative protocols for other adversaries than the synchronous one. It may also be interesting to study a composition tool that automatically ensures speculative stabilization.

\bibliographystyle{plain}
\bibliography{biblio}

\begin{thebibliography}{10}

\bibitem{BB11c}
Joffroy Beauquier and Janna Burman.
\newblock Self-stabilizing mutual exclusion and group mutual exclusion for
  population protocols with covering.
\newblock In {\em OPODIS}, pages 235--250, 2011.

\bibitem{BPV04c}
Christian Boulinier, Franck Petit, and Vincent Villain.
\newblock When graph theory helps self-stabilization.
\newblock In {\em PODC}, pages 150--159, 2004.

\bibitem{BPV08j}
Christian Boulinier, Franck Petit, and Vincent Villain.
\newblock Synchronous vs. asynchronous unison.
\newblock {\em Algorithmica}, 51(1):61--80, 2008.

\bibitem{BP89j}
James~E. Burns and Jan~K. Pachl.
\newblock Uniform self-stabilizing rings.
\newblock {\em ACM Trans. Program. Lang. Syst.}, 11(2):330--344, 1989.

\bibitem{CSZ08c}
Viacheslav Chernoy, Mordechai Shalom, and Shmuel Zaks.
\newblock A self-stabilizing algorithm with tight bounds for mutual exclusion
  on a ring.
\newblock In {\em DISC}, pages 63--77, 2008.

\bibitem{CFG92c}
Jean-Michel Couvreur, Nissim Francez, and Mohamed~G. Gouda.
\newblock Asynchronous unison.
\newblock In {\em ICDCS}, pages 486--493, 1992.

\bibitem{DP12c}
St{\'e}phane Devismes and Franck Petit.
\newblock On efficiency of unison.
\newblock In {\em TADDS}, pages 20--25, 2012.

\bibitem{D74j}
Edsger~W. Dijkstra.
\newblock Self-stabilizing systems in spite of distributed control.
\newblock {\em Communication of ACM}, 17(11):643--644, 1974.

\bibitem{D00b}
Shlomi Dolev.
\newblock {\em Self-stabilization}.
\newblock MIT Press, 2000.

\bibitem{DT11r}
Swan Dubois and S{\'e}bastien Tixeuil.
\newblock A taxonomy of daemons in self-stabilization.
\newblock {\em CoRR}, abs/1110.0334, 2011.

\bibitem{DHT04c}
Philippe Duchon, Nicolas Hanusse, and S{\'e}bastien Tixeuil.
\newblock Optimal randomized self-stabilizing mutual exclusion on synchronous
  rings.
\newblock In {\em DISC}, pages 216--229, 2004.

\bibitem{GH90j}
Mohamed~G. Gouda and Ted Herman.
\newblock Stabilizing unison.
\newblock {\em Information Processing Letters}, 35(4):171--175, 1990.

\bibitem{GKQV10c}
Rachid Guerraoui, Nikola Knezevic, Vivien Qu{\'e}ma, and Marko Vukolic.
\newblock The next 700 bft protocols.
\newblock In {\em EuroSys}, pages 363--376, 2010.

\bibitem{GKL12c}
Rachid Guerraoui, Viktor Kuncak, and Giuliano Losa.
\newblock Speculative linearizability.
\newblock In {\em PLDI}, pages 55--66, 2012.

\bibitem{H90j}
Ted Herman.
\newblock Probabilistic self-stabilization.
\newblock {\em Information Processing Letters}, 35(2):63--67, 1990.

\bibitem{H02o}
Ted Herman.
\newblock A comprehensive bibliography on self-stabilization.
\newblock http://www.cs.uiowa.edu/ftp/selfstab/bibliography/, 2002.

\bibitem{HC92j}
Shing-Tsaan Huang and Nian-Shing Chen.
\newblock A self-stabilizing algorithm for constructing breadth-first trees.
\newblock {\em Information Processing Letters}, 41(2):109--117, 1992.

\bibitem{J03c}
Prasad Jayanti.
\newblock Adaptive and efficient abortable mutual exclusion.
\newblock In {\em PODC}, pages 295--304, 2003.

\bibitem{KY97j}
Hirotsugu Kakugawa and Masafumi Yamashita.
\newblock Uniform and self-stabilizing token rings allowing unfair daemon.
\newblock {\em IEEE Transactions on Parallel and Distributed Systems},
  8(2):154--162, 1997.

\bibitem{KY02j}
Hirotsugu Kakugawa and Masafumi Yamashita.
\newblock Uniform and self-stabilizing fair mutual exclusion on unidirectional
  rings under unfair distributed daemon.
\newblock {\em J. Parallel Distrib. Comput.}, 62(5):885--898, 2002.

\bibitem{L08c}
Butler~W. Lampson.
\newblock Lazy and speculative execution in computer systems.
\newblock In {\em ICFP}, pages 1--2, 2008.

\bibitem{MMPT09j}
Fredrik Manne, Morten Mjelde, Laurence Pilard, and S{\'e}bastien Tixeuil.
\newblock A new self-stabilizing maximal matching algorithm.
\newblock {\em Theoretical Computer Science}, 410(14):1336--1345, 2009.

\bibitem{P01j}
Fernando Pedone.
\newblock Boosting system performance with optimistic distributed protocols.
\newblock {\em IEEE Computer}, 34(12):80--86, 2001.

\bibitem{T09bc}
S\'{e}bastien Tixeuil.
\newblock {\em Algorithms and Theory of Computation Handbook, Second Edition},
  chapter Self-stabilizing Algorithms, pages 26.1--26.45.
\newblock Chapman \& Hall/CRC Applied Algorithms and Data Structures. CRC
  Press, Taylor \& Francis Group, November 2009.

\end{thebibliography}

\end{document}